
\documentclass[conference,letterpaper]{IEEEtran}

\addtolength{\topmargin}{9mm}

%
%
\usepackage[utf8]{inputenc} 
\usepackage[T1]{fontenc}
\usepackage{url}
\usepackage{ifthen}
\usepackage{cite,enumitem,soul}
\usepackage[cmex10]{amsmath} 


\usepackage{xcolor}
\usepackage{graphicx}
\usepackage{amsthm}
\usepackage[colorlinks=true, allcolors=blue]{hyperref}
\usepackage{color}
\usepackage{mathtools,leftindex,tensor}
\usepackage[version=4]{mhchem}
\usepackage{tabularx}
\usepackage{caption}
\captionsetup{font=normalsize} 
\usepackage{multirow}
\usepackage{subcaption}
\captionsetup[sub]{font=normalsize}
\usepackage{adjustbox}
\usepackage{bbm}

\usepackage[utf8]{inputenc}
\usepackage{fourier} 
\usepackage{array}
\usepackage{makecell}
\usepackage{float}
\usepackage{soul}
\usepackage{upgreek}
\usepackage[makeroom]{cancel}
\mathtoolsset{showonlyrefs}

\renewcommand{\arraystretch}{1.2}
%

%



\newtheorem{theorem}{Theorem}

\DeclareMathOperator*{\argmin}{arg\,min}

%
%

%

\newcommand{\expect}[1]{{\mathbb E} \left[ #1 \right]}

\interdisplaylinepenalty=2500 

\hyphenation{op-tical net-works semi-conduc-tor}

\begin{document}
\title{On Optimal Batch Size in Coded Computing} 


\author{%
  \IEEEauthorblockN{Swapnil Saha}
\IEEEauthorblockA{
                     Rutgers University\\
                    Email: swapnil.saha@rutgers.edu}
  \and
  \IEEEauthorblockN{Emina Soljanin}
  \IEEEauthorblockA{
                      Rutgers University\\
                     Email:emina.soljanin@rutgers.edu}
  \and
  \IEEEauthorblockN{Philip Whiting}
   \IEEEauthorblockA{
                     Macquarie University\\
                     Email:philipawhiting@gmail.com}
}


\maketitle


\begin{abstract}
We consider computing systems that partition jobs into tasks, add redundancy through coding, and assign the encoded tasks to different computing nodes for parallel execution. The expected execution time depends on the level of redundancy. The computing nodes execute large jobs in batches of tasks. We show that the expected execution time depends on the batch size as well. The optimal batch size that minimizes the execution time depends on the level of redundancy under a fixed number of parallel servers and other system parameters. Furthermore, we show how to (jointly) optimize the redundancy level and batch size to reduce the expected job completion time for two service-time distributions. The simulation presented helps us appreciate the claims.  
\end{abstract}

\section{Introduction}\label{Sec:Intro}

The development of distributed computing frameworks such as Apache Spark \cite{zaharia2010spark}, MapReduce \cite{dean2008mapreduce}, alongside commercial cloud-based platforms, e.g., Google Cloud Platform (GCP) and Amazon EC2, has led to a significant speedup in large-scale AI/ML computing. Such systems partition jobs into tasks and assign them to different computing nodes for parallel execution. However, this parallel execution makes the system unpredictable due to random fluctuation of nodes' slowdown and failures \cite{dean2013tail}. The slowest (straggling) workers determine the job completion time.

A common approach to speed up distributed computing is to add redundant tasks, most recently, through erasure coding \cite{lee2017speeding,kiani2018exploitation,ferdinand2018hierarchical,aktacs2019straggler,aktas2017effective,aktas2017simplex,peng2021diversity,ozfatura2019speeding,adikari2023straggler,wang2021batch,reisizadeh2019coded}. Coded systems divide jobs into $k \geq 1$ tasks and encode them into $n \geq k$ tasks so that execution of any $k$ tasks is sufficient for job completion. As usual, the code rate $R$ is defined as $R=k/n$.  The level of redundancy can vary from minimum redundancy (splitting the job equally among workers with $k=n$) to maximum redundancy (replicating the
job with workers with $k=1$). The optimal level of redundancy depends on the nature of the randomness of the workers and how that random nature changes with the task size \cite{peng2021diversity}. 

To optimize different metrics and due to various constraints, distributed systems process tasks in batches \cite{ferdinand2018hierarchical,kiani2018exploitation,adikari2023straggler,wang2021batch,wang2019scalable,behrouzi2020efficient,kar2021throughput,kar2020delicate,he2010comet}. In particular, \cite{wang2019scalable} used a task batching technique to reduce communication overhead for scalability, \cite{behrouzi2020efficient} used the variation in batching technique (overlapping and non-overlapping) to lower the expected job execution time, \cite{kar2021throughput,kar2020delicate} addressed the throughput of the distributed computing system based on batch processing. A significant amount of research has focused on how to use stragglers' partial execution to reduce job completion time based on the task batching approach \cite{ferdinand2018hierarchical,kiani2018exploitation,adikari2023straggler,wang2021batch}. For example, \cite{wang2021batch, kiani2018exploitation} proposed a batching technique in coded distributed matrix multiplication, \cite{ferdinand2018hierarchical} designed a coding scheme that partitioned the computation into layers (batches) and applied different erasure code rates to each layer. These works treat batch size as the system hyperparameter and choose the optimum one based on exhaustive search.

Here, we show that the level of redundancy affects even the choice of an optimal batch size. To appreciate the right choice of batch size, consider two extreme cases of redundancy for fixed $n$ parallel working servers: 1) splitting, where all $n$ workers need to finish the task to complete the job, and 2) replication, where any $1$ of $n$ workers needs to finish the task. We prove (a possibly intuitive result) that splitting performs best with as large a batch size as possible, whereas replication performs best with as small a batch size as possible. We continue our analysis to find the optimal batch size for a given level of redundancy and optimize the batch size and redundancy level simultaneously to reduce the expected job competition time. 

The remainder of the paper is organized as follows. Sec.\ref{sec:system_Model} presents the distributed system model. Sec.~\ref{sec:evaluation_metric} formulates the problem. 
In Sec \ref{sec:shifted_exponential}, we answer questions of interest discussed above for the Shifted Exponential service-time PDF. In the extended version of this work \cite{optimal_batch_size_coded_computing}, we present this analysis for the PDF of Bi-Modal service time and provide all omitted proofs. Tables \ref{tab:red_vs_batch} and Table \ref{tab:summ_table_b_R} summarize our results for both PDFs in service time. 

\section{Distributed System Model}\label{sec:system_Model}



\noindent
\ul{Job and Task Size:} We assume the job can be split into tasks up to some minimum task size. We refer to the minimum task size as \emph{computing unit (CU)}. We will express the size of the job and the task sizes in the number of CUs they contain. For example, consider matrix-vector multiplication $y=Ax$, where $y \in \mathcal{R}^3$ is the vector to be calculated, the input vector $x \in \mathcal{R}^m$ and the input matrix $A \in \mathcal{R}^{3 \times m}$. Assume that we have $n=2$ worker nodes and matrix $A$ is divided into two submatrices of $A_1 \in \mathcal{R}^{2 \times m}$ and $A_2 \in \mathcal{R}^{1 \times m}$. Matrix multiplication can be done in parallel by allowing the first worker to compute $y_1=A_{1}x$ and the second worker $y_2=A_{2}x$. Here, the job size is $3$ CUs, the first worker's task size is $2$ CUs, and the second worker's task size is $1$ CU. In this study, we assume that the size of the job, \(J\), is proportional to the number of workers \(n\), that is, \(J = \mathcal{O}(n)\). For our theoretical analysis, we define the job size as \(J = nl\). We refer to \(l\) as the job scale factor.
\\[1ex]
\ul{Erasure Coding Model:}
To create redundant tasks, the job of size $J$ CUs is partitioned into $k$ tasks of equal size $s=\frac{J}{k}$ CUs. Next, 
MDS (maximum distance separable) erasure coding is used to create $n-k$ redundant tasks with code rate $R=\frac{k}{n}$. The redundant tasks are generated by erasure coding. Since MDS coding is used, the job completes when any $k$ of $n$ tasks are executed. We restrict our analysis to MDS coding as its features are convenient enough to show the analysis between the batch size and the redundancy level.
\\[1ex]
\ul{System Architecture and Task Batching Schemes:} We consider a distributed computing system, illustrated in Fig.~\ref{fig:System_Model}, consisting of a single master node and $n$ worker nodes, typical in systems such as Apache Mesos~\cite{hindman2011mesos}, Apache Spark~\cite{zaharia2010spark}, and MapReduce~\cite{dean2008mapreduce}. Following the MDS erasure coding model, a job of size $J$ is divided into $n$ tasks, with task size determined by the redundancy level, $s = \frac{J}{k}$ CUs. Each task is further partitioned into batches of size $b$ CUs, resulting in $G = \frac{s}{b}$ batch generations. A batch generation $G_i$ is considered complete once any $k$ out of $n$ workers finish their assigned batch tasks. The workers then proceed to $G_{i+1}$, until the job completes. Fig.~\ref{fig:System_Model} provides an example with $J=6$ CUs, maximum redundancy (replication), and batch size $b=2$ CUs.
\begin{figure}[t]
    \centering
    \includegraphics[scale=0.12]{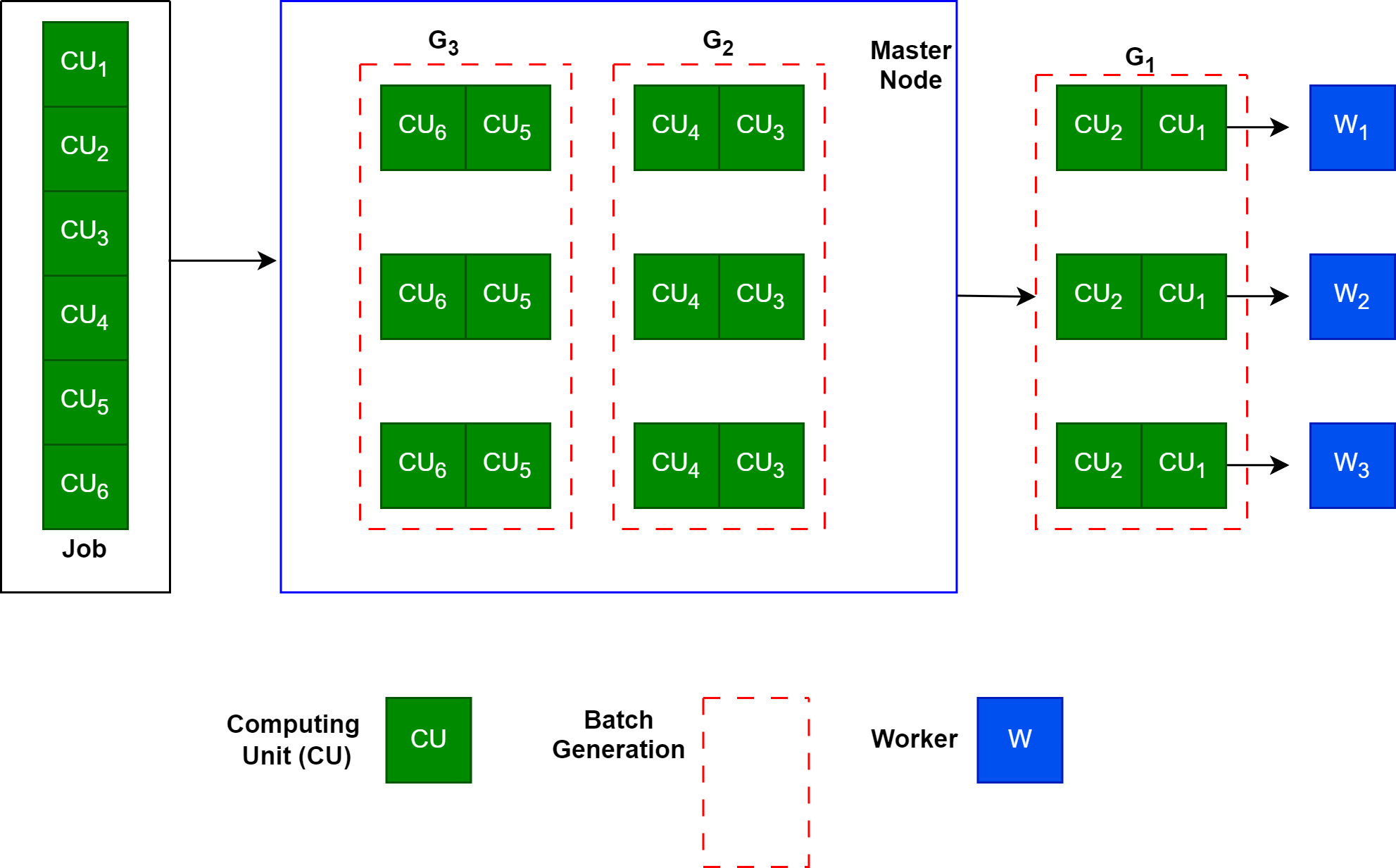}
    \caption{Distributed Computing System. Here, job size $J=6$ CUs is replicated, and $3$ batch generations are formed for selecting $b=2$ CUs. The ongoing batch generation $G_1$ gets completed when any $1$ worker finishes the batch task. The job will be completed when the $G_3$ batch generation is executed. }
    \label{fig:System_Model}
\end{figure}
\\[1ex]
\ul{Service Time PDF and Scaling:} The service time at each worker node is random and needs to scale with the task size. 
We chose the two most common service time PDFs used frequently in the literature. The first is the shifted exponential distribution \cite{joshi2014delay,aktas2017simplex,wang2021batch}, and the other is the bi-modal distribution \cite{behrouzi2019redundancy}. To model how the consecutive CUs will be executed, we will use the additive scaling model \cite{kiani2018exploitation,peng2021diversity}. Here, the execution of each CU is i.i.d. So the batch task completion time by a worker $i$ is $Y_i^b=X_1+X_2+...+X_b$, where $X_1, X _2...X_b$ are independent. 

\section{Performance Metric and Problem Statement}\label{sec:evaluation_metric}
\subsection{Performance Metric: Job Completion Time}

As for the performance evaluation metric, we choose the expected job completion time. Based on the system model (Sec.~\ref{sec:system_Model}), we calculate the expected job completion time as follows. For any arbitrary batch generation \(G_g\), the batch completion time is defined as the \(k\)-th fastest worker batch task execution time. For worker \(i\) in batch generation \(G_g\), we denote the batch task completion time as \(\leftindex_g Y^b_i\). The batch generation completion time is then equal to the \(k\)-th fastest worker's batch task completion time, denoted as \(\leftindex_g Y^b_{k:n}\). Mathematically, \(\leftindex_g Y^b_{k:n}\) represents the \(k\)-th order statistic of the \(n\) samples taken from the distribution of \(Y^b\). Finally, the job completion time, \(\leftindex^G Y^b_{k:n}\), is the sum of the \(G\) batch generations' completion times: $\leftindex^G Y^b_{k:n} = \sum_{g=1}^G \leftindex_g Y^b_{k:n}$. Since every batch generation is executed independently, the expected job completion time can be expressed as:

\begin{equation}
\begin{aligned}
    \label{equ:ejct}
    \mathbb{E}\left[\leftindex^G Y^b_{k:n}\right] = G \mathbb{E}\left[Y^b_{k:n}\right]=\frac{l}{Rb}\mathbb{E}\left[Y^b_{k:n}\right]
\end{aligned}
\end{equation}
The above expression expresses the tradeoff between the expected job completion time and batch size. For a fixed redundancy level (code rate $R$), if the batch
size $b$ decreases, the expected batch completion $\mathbb{E}\left[Y^b_{k:n}\right]$ decreases, but now we have more batch generations to complete, and vice versa. In subsequent sections, we will demonstrate how the optimal choice of batch size is influenced by the redundancy level and the service time probability density function (PDF).

\subsection{Problem Statement and Summary of Results}
\begin{table*}
\centering
\renewcommand{\arraystretch}{1.3}
\setlength{\tabcolsep}{4pt} 
\caption{Optimal Batch Size for Fixed Redundancy}
\begin{tabular}{|c|c|}
\hline
\textbf{Service Time PDF} & \textbf{Optimal Batch Size} \\ \hline
Shifted Exponential & \makecell{Either minimum or maximum batch size.\\ The decision depends on given code rate $R$ and job size.} \\ \hline
Bi-Modal & \makecell{Sequentially shifts from minimum size to maximum size. The decision \\depends on the straggling probability $\epsilon$ and given code rate $R$.} \\ \hline
\end{tabular}
\label{tab:red_vs_batch}
\end{table*}



\begin{table*}[htbp]
\centering
\renewcommand{\arraystretch}{1.3} 
\setlength{\tabcolsep}{6pt} 
\caption{Optimal Strategy}
\begin{tabular}{|c|c|c|c|}
\hline
\textbf{Service Time PDF} & \textbf{Low Straggling} & \textbf{Medium Straggling} & \textbf{High Straggling} \\ \hline
\textbf{\makecell{Shifted Exponential}} 
& \makecell{Splitting with maximum batch size} 
& \makecell{Low $l$: Coding with minimum batch size \\ High $l$: Splitting with maximum batch size} 
& \makecell{Replication with minimum batch size} \\ \hline
\textbf{\makecell{Bi-Modal}} 
& \makecell{Splitting with maximum batch size} 
& \makecell{Coding with minimum batch size} 
& \makecell{Splitting with maximum batch size} \\ \hline
\end{tabular}
\label{tab:summ_table_b_R}
\end{table*}

Our study on determining the optimal batch size $b^*$ has two main objectives. First, we aim to investigate how the optimal batch size $b^*$ is related to the redundancy level in a fixed number of workers (\(n\)). For a given redundancy level, defined by the code rate \(R = \frac{k}{n}\), the task size is determined as \(s = \frac{J}{k}\). The choice of batch size \(b\) is constrained such that $b \in \{x \in \mathcal{N}: s \mod x =0,1 \leq x \leq s\}$
Here, we assume that each batch generation consists of equal-sized batches, each containing \(b\) computation units (CUs). The goal is to determine the batch size \(b\) that minimizes the expected job completion time \(\mathbb{E}[Y^b_{k:n}]\). This problem can be formulated as the following optimization: $b^* = \argmin_{b} \mathbb{E}[Y^b_{k:n}]$
subject to a fixed  \(R = \frac{k}{n}\). Table \ref{tab:red_vs_batch} summarizes the results of our analysis, showing the relationship between redundancy levels and optimal batch size for two different service time PDFs. The findings indicate that the optimal batch size decision is influenced by the underlying service time distribution, highlighting the importance of accounting for the service time characteristics when optimizing batch size.
Next, we optimize the expected job completion time with respect to both the redundancy level (code rate \(R\)) and the batch size \(b\). For a fixed number of workers (\(n\)), we can vary the redundancy level across three cases: splitting (\(R = \frac{1}{n}\)), coding (\(\frac{1}{n} < R < 1\)), and replication (\(R = 1\)). For each choice of redundancy, the task size \(s\) changes, and so does the choice of batch size \(b\). \emph{Our goal is to simultaneously optimize the code rate \(R^*\) and batch size \(b^*\) to minimize the expected job completion time:} $[b^*, R^*] = \argmin_{b, R} \mathbb{E}[Y^b_{k:n}]$.
Table \ref{tab:summ_table_b_R} summarizes the optimal strategies for the two service time PDFs. The results indicate that the optimal strategy for the same straggling scenario varies depending on the underlying service time distribution.

\section{Shifted Exponential Service Time}\label{sec:shifted_exponential}

Under the Shifted Exponential service time model, each CU execution follows the shifted exponential distribution, i.e., $X \sim S-Exp(\Updelta,W)$, where $\Updelta$ is the minimum execution time and $W$ is the straggling parameter. The straggling effect increases with the increase of $W$. The expected value of a single CU execution time is $\expect{X}=\Updelta+W$.

\begin{figure*}[hbt]
    \centering
    \begin{subfigure}[b]{0.3\textwidth}
        \centering
        \includegraphics[width=\textwidth]{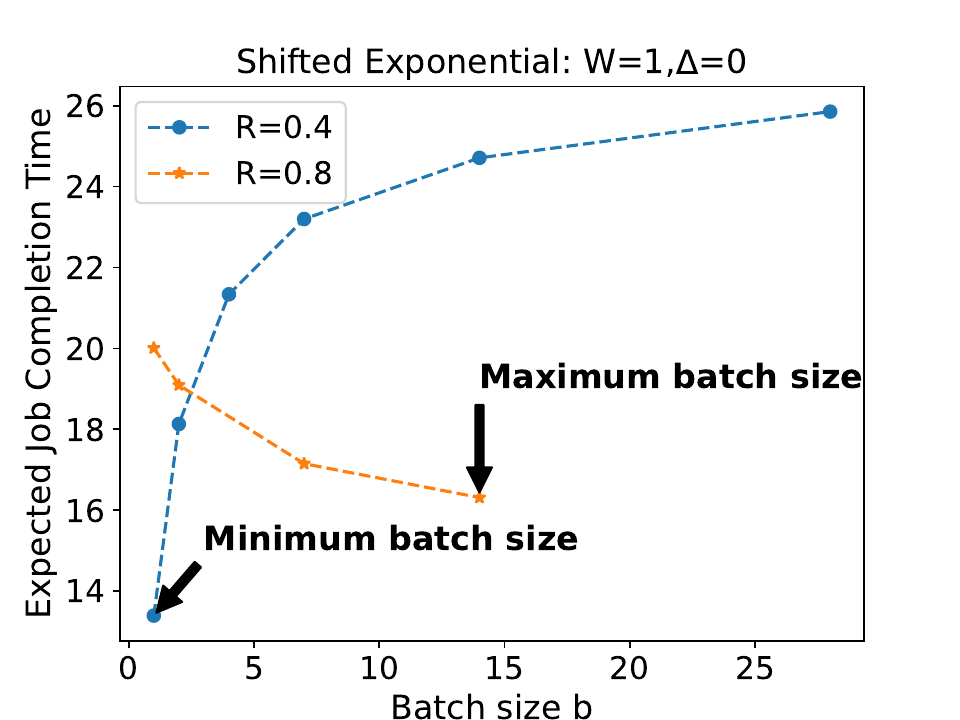}
        \caption{$J=112$ CUs, $n=10$}
        \label{fig:min_max_exp_4_8}
    \end{subfigure}
    \begin{subfigure}[b]{0.3\textwidth}
        \centering
        \includegraphics[width=\textwidth]{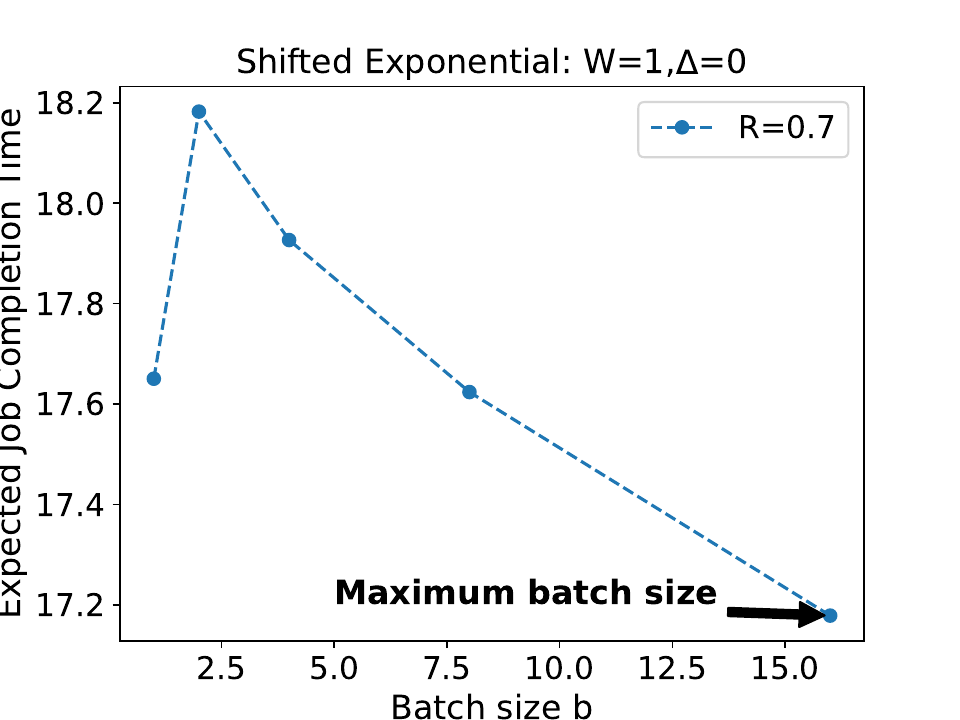}
        \caption{$J=112$ CUs, $n=10$}
        \label{fig:max_exp_7}
    \end{subfigure}
    \begin{subfigure}[b]{0.3\textwidth}
        \centering
        \includegraphics[width=\textwidth]{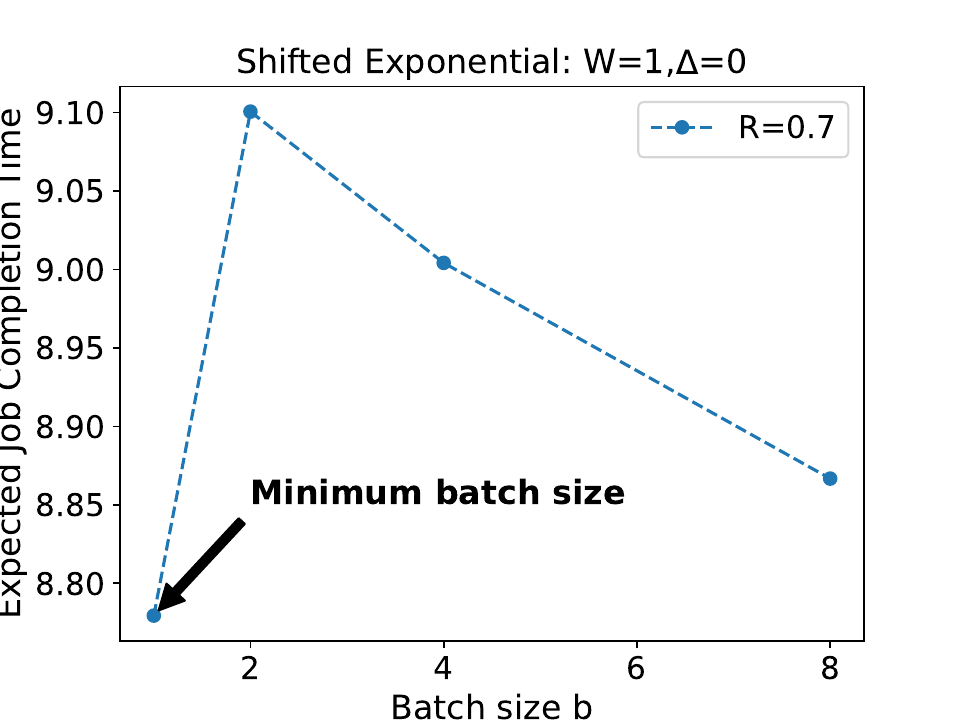}
        \caption{$J=56$ CUs, $n=10$}
        \label{fig:min_exp_7}
    \end{subfigure}
    \caption{Optimizing batch size for different redundancy levels. (a) The optimum batch size is the maximum for redundancy level $R=0.8$ and is the minimum for redundancy level $R=0.4$.
    For (b) and (c), we have the same redundancy level, $R=0.7$, but different job sizes. (b) is optimum with maximum batch size. (c) is optimum with minimum batch size.}
    \label{fig:exo_b_vs_R}
\end{figure*}

\subsection{Optimum Batch Size and Redundancy }
We start the analysis by presenting the optimum batch size for the two extreme cases, namely for splitting ($R=\frac{1}{n}$) and replication ($R=1$). Note that these results apply to any general service time PDF. 
\begin{theorem}\label{th:rep}
For replication, job completion time does not increase as the batch size decreases. The optimum batch size is a single computing unit, $b^*=1$ CU.
\end{theorem}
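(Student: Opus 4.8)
The plan is to specialize the completion-time expression \eqref{equ:ejct} to replication and reduce everything to a single sample-path inequality comparing a minimum of sums with a sum of minima. First I would fix the replication case: here $k=1$, so each worker is assigned the whole job, $s=J/k=J=nl$, and the relevant order statistic $Y^b_{k:n}$ is the first order statistic $Y^b_{1:n}=\min_i \leftindex_g Y^b_i$, i.e. the time of the fastest of $n$ i.i.d. batch executions. By \eqref{equ:ejct} the expected completion time is $\frac{J}{b}\,\mathbb{E}[Y^b_{1:n}]$ over $G=J/b$ generations. Writing $h(b):=\mathbb{E}[Y^b_{1:n}]$ and $g(b):=h(b)/b$, the completion time equals $J\,g(b)$, so it suffices to show $g$ is minimized at $b=1$ and never decreases when $b$ grows.

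The crux is a refinement inequality. Index worker $i$'s CU times inside a size-$b$ batch by $X_{i,1},\dots,X_{i,b}$ and split the batch at an arbitrary point into $A_i=\sum_{j\le b_1}X_{i,j}$ and $B_i=\sum_{j>b_1}X_{i,j}$ with $b_1+b_2=b$. On every sample path $\min_i (A_i+B_i)\ge \min_i A_i+\min_i B_i$, because a single size-$b$ generation forces one worker to be fastest over the entire batch, whereas two finer generations let the system switch to whichever worker is fastest on each sub-batch; the added flexibility can only help. Taking expectations and using that generations are executed independently (so their completion times add) yields $h(b)\ge h(b_1)+h(b_2)$, i.e. $h$ is superadditive. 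Iterating the refinement down to unit batches gives $h(b)\ge b\,h(1)$, hence $g(b)\ge g(1)$, which establishes $b^*=1$. Applying the same step whenever one admissible batch size subdivides another shows the completion time does not increase as the batch size decreases. Since the argument never uses the form of the CU distribution, the conclusion holds for any service-time PDF, as claimed.

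The hard part will not be the algebra but justifying the key inequality cleanly. I must argue on a common probability space that the size-$b$ generation and its size-$b_1,b_2$ refinement use the same per-CU times $X_{i,j}$, so that $\min_i(A_i+B_i)\ge \min_i A_i+\min_i B_i$ holds sample-path-wise before any expectation is taken; then I invoke the additive-scaling assumption $Y^b_i=X_1+\cdots+X_b$ with i.i.d. CUs, together with independence across generations, to convert the pathwise bound into the additive bound on $h$. A minor point to handle is admissibility: the refinement requires $b_1,b_2$ to divide $s$, so I would phrase the conclusion as ``$b=1$ is optimal and finer batches are never worse'' over the admissible set $\{x\in\mathcal{N}:s \bmod x=0\}$, rather than asserting strict monotonicity between two incomparable divisors.
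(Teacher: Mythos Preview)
Your proposal is correct and follows essentially the same route as the paper: both arguments couple the coarse and fine batchings on the same per-CU times and then apply the elementary inequality $\min_i(A_i+B_i)\ge \min_i A_i+\min_i B_i$ iteratively down to unit batches. Your packaging via the superadditivity of $h(b)=\mathbb{E}[Y^b_{1:n}]$ and the explicit discussion of the coupling and admissibility are more careful than the paper's write-up, but the key idea is identical.
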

\begin{proof}
Let's assume each worker gets a task of size $s$  for selecting redundancy parameter $k=1$ ($R=\frac{1}{n}$). If the batch size is $b=s$, then there is single batch generation $G=1$, and the job completion time will be $\leftindex^1 Y^s_{1:n}=\min{\{\leftindex_1Y_1^s,\leftindex_1Y_2^s\dots\leftindex_1Y_n^s\}}$. The following inequalities show that the system finishes the job faster if we create more batch generations, $G>1$, with lower batch size $b<s$. 
\begin{equation}\label{equ:proof_max_redundancy}
\begin{aligned}
& \min{\{\leftindex_1Y_1^s,\leftindex_1Y_2^s,\dots,\leftindex_1Y_n^s\}}\\ &= \min{\{(\leftindex_1Y_1^{s-1}+\leftindex_2Y_1^{1}),(\leftindex_1Y_2^{s-1}+\leftindex_2Y_2^{1}).....(\leftindex_1Y_n^{s-1}+\leftindex_2Y_n^{1})\}} \\
    & \geq \min{\{\leftindex_1Y_1^{s-1},\leftindex_1Y_2^{s-1}},\dots,\leftindex_1Y_n^{s-1}\} + \min{\{{\leftindex_2Y_1^{1},\leftindex_2Y_2^{1}},\dots,\leftindex_2Y_n^{1}\}\}}\\
    & \shortvdotswithin{ = }\notag \\[-3.5ex]
    &\geq \sum_{i=1}^{G} \min{\{\leftindex_gY_1^{1},\leftindex_gY_2^{1}},\dots,\leftindex_gY_n^{1}\}
\end{aligned}
\end{equation}
In the first equality, we write the task execution time $\leftindex_1Y_j^s$ of size $s$ as the subsequent executions of size $s-1$ and unit task.
The first inequality follows from the fact that $\min{\{A+B,C+D\}}\geq\min{\{A,C\}} + \min{\{B,D\}}$.
Thus, for replication, the job completion time decreases as we split the task into more batches, and the lowest job completion time will be with the lowest possible batch size, which is $b=1$ CU.
\end{proof}
\begin{theorem}\label{th:splitting}
    For splitting, job completion time is non-decreasing as the batch size decreases. The optimum batch size is equal to the task size, i.e., $b^*=s$ 
\end{theorem}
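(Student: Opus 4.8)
The plan is to mirror the proof of Theorem~\ref{th:rep}, exchanging the roles of the fastest and slowest worker. For splitting we have $k=n$, so a batch generation finishes only once \emph{every} worker has completed its batch task; the batch generation completion time is therefore the $n$-th order statistic, i.e.\ the maximum over the $n$ workers, and with a single batch generation ($b=s$, $G=1$) the job completion time is $\leftindex^1 Y^s_{n:n}=\max\{\leftindex_1 Y_1^s,\dots,\leftindex_1 Y_n^s\}$. I want to show that subdividing the task into more (smaller) batch generations can only increase or leave unchanged this quantity.

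First I would, using the additive scaling model, write each per-worker execution of a merged batch as the sum of its constituent sub-batch executions. The key step is the elementary bound dual to the $\min$ inequality used for replication,
\begin{equation}
\max\{A+B,\,C+D\}\le\max\{A,C\}+\max\{B,D\},
\end{equation}
which holds because each of $A+B$ and $C+D$ is at most $\max\{A,C\}+\max\{B,D\}$, and which extends to $n$ workers as $\max_j\{A_j+B_j\}\le\max_j A_j+\max_j B_j$. Applied to two adjacent batch generations this gives $\max_j\{\leftindex_g Y_j^{b_1}+\leftindex_{g+1} Y_j^{b_2}\}\le\max_j\{\leftindex_g Y_j^{b_1}\}+\max_j\{\leftindex_{g+1} Y_j^{b_2}\}$, i.e.\ \emph{merging} two adjacent generations into one larger batch generation never increases the total completion time. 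Repeatedly merging collapses any $G$-generation schedule down to the single full-size batch, yielding $\max_j\{\leftindex_1 Y_j^s\}\le\sum_{g=1}^{G}\max_j\{\leftindex_g Y_j^{b}\}$, so the completion time is non-decreasing as $b$ shrinks and is minimized at $b^*=s$.

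Two points need care, though neither is a genuine obstacle. The merging inequality is a \emph{sample-path} statement, so I would note that taking expectations preserves the ordering and gives $\mathbb{E}[\leftindex^1 Y^s_{n:n}]\le\mathbb{E}[\leftindex^G Y^b_{n:n}]$, which is the quantity appearing in~\eqref{equ:ejct}. The telescoping also relies on the additive scaling assumption, so that the merged and split schedules are built from the same underlying i.i.d.\ unit executions; I would state this decomposition once and then invoke the $n$-term max bound repeatedly rather than re-deriving it at each level. The hard part, such as it is, will be purely bookkeeping: keeping the batch-generation indices consistent so that the same unit executions appear on both sides of every merge. Beyond that the argument is the exact dual of the replication case, with $\max$ replacing $\min$ and each inequality reversed.
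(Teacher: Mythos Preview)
Your proposal is correct and follows essentially the same approach as the paper's own proof: both set up the splitting case as the $n$-th order statistic (maximum over workers), invoke the elementary subadditivity $\max\{A+B,C+D\}\le\max\{A,C\}+\max\{B,D\}$, and telescope it down to unit batches to obtain $\max_j\{\leftindex_1 Y_j^s\}\le\sum_{g=1}^{G}\max_j\{\leftindex_g Y_j^{1}\}$. Your added remarks about taking expectations and the additive-scaling bookkeeping are sound but do not depart from the paper's argument.
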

\begin{proof}
Let's assume each worker gets a task of size $s$ for selecting redundancy parameter $k=n$ ($R=1$). If there is single batch generation $G=1$ and batch size is $b=s$, the job completion time will be $\leftindex^1 Y^s_{1:n}=\max{\{\leftindex_1Y_1^s,\leftindex_1Y_2^s,\dots,\leftindex_1Y_n^s\}}$. We claim the system can finish the job more slowly if we create more batch generations, $G>1$, with a lower batch size $b<s$. This can be shown as follows.
\begin{equation}\label{equ:proof_min_redundancy}
\begin{aligned}
\max{\{\leftindex_1Y_1^s,\leftindex_1Y_2^s,\dots,\leftindex_1Y_n^s\}}
\leq \sum_{i=1}^{G} \max{\{\leftindex_gY_1^{1},\leftindex_gY_2^{1}},\dots,\leftindex_gY_n^{1}\}
\end{aligned}
\end{equation}
which is based on the following property: $\max{\{A+B,C+D\}}\leq\max{\{A,C\}} + \max{\{B,D\}}$. Thus, the optimum batch size is the highest possible, which is $b=s$ CUs.
\end{proof}
We next consider the general code rate, $\frac{1}{n} < R < 1$. We will start our analysis by presenting the expected job completion time. The exact expression is unsuitable for theoretical analysis; we derive the expression for a large value of $n$ (Appendix in \cite{optimal_batch_size_coded_computing}). It is worth noting that although our theoretical results are based on a large scale size of $n$, they are useful even for a small value of worker number and job size, as demonstrated in the simulation results.

\newpage
\noindent
\ul{Expected Job Completion Time:}
The expected job completion time for $X \sim S-Exp(\Delta,W)$ of job size $J=nl$ CUs, as $n\to \infty$, is given as
\begin{equation}\label{equ:expo_shifted_ejct}
\frac{1}{W}\expect{\leftindex^GY^b_{k:n}}= l\Big(\frac{\Updelta}{RW}+\frac{m}{Rb}\Big) \quad  n \to \infty
\end{equation}
where $m$ is the solution of $P(b,m)=R$ and $P(b,m)$ is regularized lower incomplete gamma function. By definition, $P(b,m)=\frac{1}{(b-1)!}\int_0^m t^{b-1} e^{-t}dt=1 - e^{-m}\sum\limits_{k=0}^{b-1}\frac{m^k}{k!}
$. \\
The first part of the equation is due to the minimum completion time of each CU, and the second part is due to the randomness of the task execution. In Theorem \ref{thm:b1_bmax}, we minimize \eqref{equ:expo_shifted_ejct} with respect to $b$ to get the optimum batch size for an arbitrary fixed redundancy level (code rate $R$).

\begin{theorem}\label{thm:b1_bmax}

For $X \sim S-Exp(\Updelta,W)$ and for an arbitrary fixed code rate $R$, the optimum batch size is either $b^*=1$ or $b^*=s$ (task size).

\end{theorem}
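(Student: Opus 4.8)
The plan is to fix the code rate $R$ and read off from \eqref{equ:expo_shifted_ejct} that the only $b$-dependent term in the expected completion time is $\frac{l}{R}\cdot\frac{m}{b}$, so that $b^{*}=\argmin_{b}\frac{m(b)}{b}$, where $m(b)$ is defined implicitly by $P(b,m)=R$. First I would extend $b$ to a continuous variable on $[1,s]$; since $\partial_{m}P(b,m)=\frac{m^{b-1}e^{-m}}{\Gamma(b)}>0$, the implicit function theorem makes $m(b)$ a smooth, well-defined function, so it suffices to study the real function $g(b):=m(b)/b$ on $[1,s]$.

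The target is to show $g$ is single-peaked (quasiconcave) on $[1,s]$: then $g(b)\ge\min\{g(1),g(s)\}$ for every $b\in[1,s]$, and since the feasible batch sizes form a subset of $[1,s]$ containing both endpoints $b=1$ and $b=s$, the discrete minimum is forced onto one of them, giving $b^{*}\in\{1,s\}$. To establish single-peakedness I would argue that every interior critical point of $g$ is a strict local maximum, which rules out an interior minimum (two maxima would have to sandwich a minimum, forcing at most one critical point). Differentiating, $g'(b)=\frac{b\,m'(b)-m(b)}{b^{2}}$, so at a stationary point $b_{0}$ we have $m'(b_{0})=m(b_{0})/b_{0}$, and a short computation collapses the second derivative to $g''(b_{0})=m''(b_{0})/b_{0}$. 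Hence the whole theorem reduces to the sign statement: $m''(b_{0})<0$ at any $b_{0}$ where $m'(b_{0})=m(b_{0})/b_{0}$, i.e.\ the quantile $m(b)$ is concave in $b$ wherever $g$ is stationary.

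The derivatives of $m$ come from implicitly differentiating $P(b,m)=R$: one gets $m'(b)=-\partial_{b}P/\partial_{m}P$, and $\partial_{b}P(b,m)=\frac{1}{\Gamma(b)}\int_{0}^{m}t^{b-1}e^{-t}\bigl(\ln t-\psi(b)\bigr)\,dt$ brings in the digamma function $\psi$ and a log-weighted incomplete-gamma integral, with a further differentiation producing $m''$. Controlling the sign of this expression is the main obstacle, since the shape-parameter derivatives are not elementary. My plan to tame them is to use the probabilistic reading of the constraint: $g(b)=m(b)/b$ is exactly the $R$-quantile of the sample mean $\frac{1}{b}\sum_{i=1}^{b}X_{i}$ of i.i.d.\ unit-rate exponentials, which concentrates at $1$ as $b$ grows, together with the log-concavity of the Gamma density, to bound the $\ln t-\psi(b)$ factor and pin down the sign of $m''$ at stationary points.

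To keep that sign analysis clean I would dispose of the easy regime separately: for $R\le\tfrac12$ the quantile lies below the mean and I expect $g$ to be monotonically increasing (so $b^{*}=1$ outright), confining the genuine work to $R>\tfrac12$, where the single interior maximum can appear. Having located $b^{*}\in\{1,s\}$, the tie-break follows by comparing $g(1)=-\ln(1-R)$ with $g(s)=m(s)/s$; since $s=J/k=l/R$ depends on the job scale $l$, this comparison---and thus which endpoint is optimal---depends on both $R$ and the job size, consistent with Table~\ref{tab:red_vs_batch}.
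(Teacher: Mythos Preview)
Your high-level strategy matches the paper's exactly: both reduce to the $b$-dependent part $g(b)=m(b)/b$ and argue that $g$ is either monotone or unimodal on $[1,s]$, forcing the minimum to an endpoint. The divergence is entirely in how $m'(b)$ is obtained. You propose to differentiate the constraint $P(b,m)=R$ exactly, which drags in $\partial_{b}P(b,m)=\frac{1}{\Gamma(b)}\int_{0}^{m}t^{b-1}e^{-t}(\ln t-\psi(b))\,dt$ and the digamma function, and you correctly flag the resulting sign analysis as the hard step. The paper bypasses this completely: it reads the relation $P(b,m)=R$ as the statement that a Poisson$(m)$ variable has $\Pr\{N\le b-1\}=1-R$, applies the Normal approximation to Poisson to get the closed form $b\approx Z\sqrt{m}+m+1$ with $\Phi(Z)=1-R$, and from that reads off $m'(b)=\frac{2\sqrt{m}}{Z+2\sqrt{m}}$ directly. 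The sign of $g'(b)=\frac{1}{b}\bigl(m'(b)-m/b\bigr)$ then reduces to the elementary inequality $Z\sqrt{m}+2\gtrless 0$, from which monotonicity (for $Z\ge 0$, i.e.\ $R\le\tfrac12$, confirming your side case) or a single interior maximum (for $Z<0$) is immediate---no digamma, no second derivative, no log-concavity argument needed.

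What each route buys: the paper's Normal-to-Poisson shortcut is the whole trick and makes the proof a few lines of algebra, at the cost of being an approximation valid only for moderately large $m$ (consistent with the paper's large-$n$ framing). Your route is in principle exact and more rigorous, but the obstacle you identify is real: pinning down the sign of $m''$ from the digamma-weighted integrals is genuinely delicate, and the log-concavity/sample-mean heuristic you sketch is not obviously enough to close it. If you want a clean finished argument, I would adopt the paper's approximation; if you want full rigor, be prepared for substantially more work on the incomplete-gamma shape derivatives.
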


\begin{proof}
Equ.\eqref{equ:expo_shifted_ejct} as a function of variables \( b \) and \( R \), denoted as \( f(b,R) = \frac{m}{R b} \). We will show that $f(b,R)$ is either a monotone or a 
unimodal function with respect to $b$, which proves the theorem. 
Assuming \( b \) is continuous, the derivative of the function \( f(b,R) \) with respect to \( b \) can be expressed as follows:
    \begin{equation} \label{equ:diff_f}
        f'(b,R)=\frac{\partial(b,R)}{\partial b}=\frac{1}{Rb}\Big(\frac{dm}{db}-\frac{m}{b}\Big)
    \end{equation}
Based on the definition (defined at Equ.\eqref{equ:expo_shifted_ejct}), the batch size $b$ is analogous to the number of Poisson events $k$, $m$ is analogous to the Poisson rate parameter $\lambda$, and CDF of the $k-1$ Poisson events is equal to the $1-R$. So, using the Normal approximation to the Poisson distribution, batch size $b$ can be written as follows:
\begin{equation}\label{equ:b_approx_m}
    b \approx Z\sqrt{m} + m+ 1
\end{equation}
where $\Phi(Z)=1-R$ and $\Phi(\cdot)$ is the area of the left tail of the standard normal CDF.
From \eqref{equ:b_approx_m}, we can find the expression of $\frac{dm}{db}=\frac{2\sqrt{m}}{Z+2\sqrt{m}}$. Thus Equ.\eqref{equ:diff_f} can be rewritten as follows
\begin{equation}\label{equ:Final_equ_diff_f}
    f'(b,R) = \frac{1}{Rb} \Big(\frac{2\sqrt{m}}{Z+2\sqrt{m}}-\frac{m}{b}\Big)
\end{equation}
Inspection shows that there are code rate $R_1$ and $R_2$ $(R_1<R_2)$, for which
$f'(b,R)$ is always positive for $R<R_1$, and is always negative for $R>R_2$. So, $f(b,R)$ is either an increasing or decreasing function for the corresponding code rate region. For $R_1 < R < R_2$, there is batch size $b'>1$ for which $f'(b,R)$ is positive for $b<b'$ and negative for $b>b'$. This implies that $f(b,R)$ is a unimodal 
function for $R_1<R<R_2$.
\end{proof}
\noindent
\ul{Simulation Result:}
We simulated the expected job completion time for two different job sizes, $J = 112$ CUs and $J = 56$ CUs, and three different code rates, $R = 0.4$, $0.7$, and $0.8$ (Fig.~\ref{fig:exo_b_vs_R}). For Fig. \ref{fig:min_max_exp_4_8}, we have job size $J=112$ CUs and $n=10$ workers. We found that the optimum batch size is the minimum for $R=0.4$ and maximum for $R=0.8$. For 
Fig.~\ref{fig:max_exp_7} and Fig.~\ref{fig:min_exp_7}, we have the same redundancy levels and number of workers, $R=0.7$ and $n=10$, but different job sizes. 
Fig.~\ref{fig:max_exp_7} has job size $J=112$ CUs, and 
Fig.~\ref{fig:min_exp_7} has job size $56$ CUs. The figures are consistent with Theorem
\ref{thm:b1_bmax}, showing the optimum batch size is either at the minimum or the maximum despite a moderate scale factor $n = 10$. 


The optimal batch size for larger jobs is the maximum (Fig. \ref{fig:max_exp_7}). The optimum batch size for smaller jobs is the minimum (Fig. \ref{fig:min_exp_7}). \\
From the above discussion, we can conclude that there is a coding rate $R'$ for which the optimal batch size is always the maximum (task size $s$) when $R>R'$. The theoretical value of $R'$, $R' \approx 0.72$, is obtained in the following theorem. 

\begin{theorem}
    For $X \sim S-Exp(W,\Updelta)$, $b^*=s$ CU is the optimal batch size for $R>R'$, where $R'>0$ is the unique positive solution of the following system of equations: $R'=1-e^{-m_1}$ and $e^{m_1}=1+2m_1$.
\end{theorem}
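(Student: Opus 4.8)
The plan is to build directly on Theorem~\ref{thm:b1_bmax}, which already guarantees that the optimal batch size is one of the two extremes, $b^*=1$ or $b^*=s$. Hence for a job of task size $s$ the decision reduces to comparing the two endpoint values of $f(b,R)=\frac{m}{Rb}$, where $m=m(b,R)$ solves $P(b,m)=R$. The crucial observation is that the \emph{hardest} job to push toward the maximum batch is the smallest nontrivial one, $s=2$, whose only competitor to $b=2$ is $b=1$. So I first locate the code rate $R'$ at which $f(1,R')=f(2,R')$, and then argue that once $R>R'$ even this smallest job prefers $b=s$, after which every larger job follows from the unimodal shape established in Theorem~\ref{thm:b1_bmax}.

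First I would evaluate $f$ at the two smallest integer batch sizes using the exact Poisson-sum form of the regularized incomplete gamma function, which is clean at integers. At $b=1$, the constraint $P(1,m_1)=1-e^{-m_1}=R$ gives $f(1,R)=\frac{m_1}{R}$ with $R=1-e^{-m_1}$. At $b=2$, $P(2,m_2)=1-e^{-m_2}(1+m_2)=R$ gives $f(2,R)=\frac{m_2}{2R}$. Imposing the balance $f(1,R)=f(2,R)$ forces $m_2=2m_1$; substituting this into the $b=2$ constraint and cancelling one factor of $e^{-m_1}$ collapses it to $e^{m_1}=1+2m_1$. Together with $R'=1-e^{-m_1}$ this is precisely the stated system. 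I would then check that $e^{m_1}=1+2m_1$ has a unique positive root: the right side has slope $2$ at $m_1=0$ while the left has slope $1$, so the right dominates initially, and convexity of $e^{m_1}$ forces exactly one later crossing, yielding $m_1\approx 1.256$ and $R'\approx 0.72$.

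Next I would settle the direction of the inequality: a short monotonicity check (confirmed by the numerics above) shows $f(2,R)<f(1,R)$ exactly when $R>R'$, so for the $s=2$ job the optimum is already $b=2=s$ in that regime. The remaining step promotes this to all job sizes. Here I invoke Theorem~\ref{thm:b1_bmax}: for the relevant code rates $f(\cdot,R)$ is either monotone decreasing or unimodal (increase-then-decrease) in $b$. If $f(2,R)<f(1,R)$, then $f$ cannot be increasing on all of $[1,2]$, so its peak (if present) lies in $[1,2)$ and $f$ is non-increasing on $[2,\infty)$; hence $f(s,R)\le f(2,R)<f(1,R)$ for every integer $s\ge 2$. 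Combined with the endpoint dichotomy $b^*\in\{1,s\}$, this gives $b^*=s$ for all job sizes whenever $R>R'$.

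The step I expect to be the main obstacle is this last promotion from the two-point comparison to the statement for \emph{all} job sizes: I must ensure the unimodal shape genuinely forces $f(s,R)<f(1,R)$ at every intermediate integer $s$, not merely at $s=2$ and in the $s\to\infty$ limit. Pinning the peak into $[1,2)$ rather than somewhere in $(2,s)$ is what closes this gap, and it is exactly the consequence of $f(2,R)<f(1,R)$ together with the unimodality of Theorem~\ref{thm:b1_bmax}; I would state this step carefully to avoid leaving a hole between the boundary case $s=2$ and the general claim.
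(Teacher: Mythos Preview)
The paper itself does not include a proof of this theorem in the provided text; it is stated without argument and the omitted proofs are deferred to the extended version. So there is no in-paper proof to compare against line by line. That said, your approach is natural and fits squarely inside the framework the paper sets up in Theorem~\ref{thm:b1_bmax}, and it recovers exactly the system $R'=1-e^{-m_1}$, $e^{m_1}=1+2m_1$ with the correct numerical value $R'\approx 0.72$.

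Your derivation of the crossing condition is correct: using the exact Poisson-sum form of $P(b,m)$ at $b=1$ and $b=2$, equating $f(1,R)=f(2,R)$ forces $m_2=2m_1$, and substituting into $P(2,m_2)=R$ collapses to $e^{m_1}=1+2m_1$. The uniqueness argument via convexity of $h(m_1)=e^{m_1}-1-2m_1$ (with $h(0)=0$, $h'(0)<0$, $h''>0$) is also fine. The promotion step is the right idea: once $f(2,R)<f(1,R)$, unimodality pins the peak into $[1,2)$, so $f$ is non-increasing on $[2,\infty)$ and hence $f(s,R)\le f(2,R)<f(1,R)$ for every $s\ge 2$; combined with the endpoint dichotomy $b^*\in\{1,s\}$ this gives $b^*=s$.

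The one place you should tighten is the sentence ``a short monotonicity check \ldots\ shows $f(2,R)<f(1,R)$ exactly when $R>R'$.'' You have shown that $R'$ is the \emph{unique} crossing, but you have not yet argued the \emph{direction} of the inequality on each side. A clean way to do this is to examine $\varphi(R)=2m_1(R)-m_2(R)$: a small-$R$ expansion gives $m_1\sim R$ and $m_2\sim\sqrt{2R}$, so $\varphi(R)<0$ near $0$, while as $R\to 1^-$ one has $m_2\approx m_1+\log m_1$ and hence $\varphi(R)\to+\infty$. Together with the uniqueness of the zero this pins down the sign change and closes the argument. Also be aware that Theorem~\ref{thm:b1_bmax} in the paper establishes unimodality via the normal approximation $b\approx Z\sqrt{m}+m+1$, which is coarse at $b=1,2$; your proof \emph{uses} that unimodality, so if you want a fully self-contained argument you may need to verify the shape of $f(\cdot,R)$ near small $b$ directly rather than importing it from the approximation.
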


    



    

\subsection{Optimize Expected Job Completion Time}
\begin{figure*}[hbt!]
    \centering
    \begin{subfigure}[b]{0.24\textwidth}
        \centering
        \includegraphics[width=\textwidth]{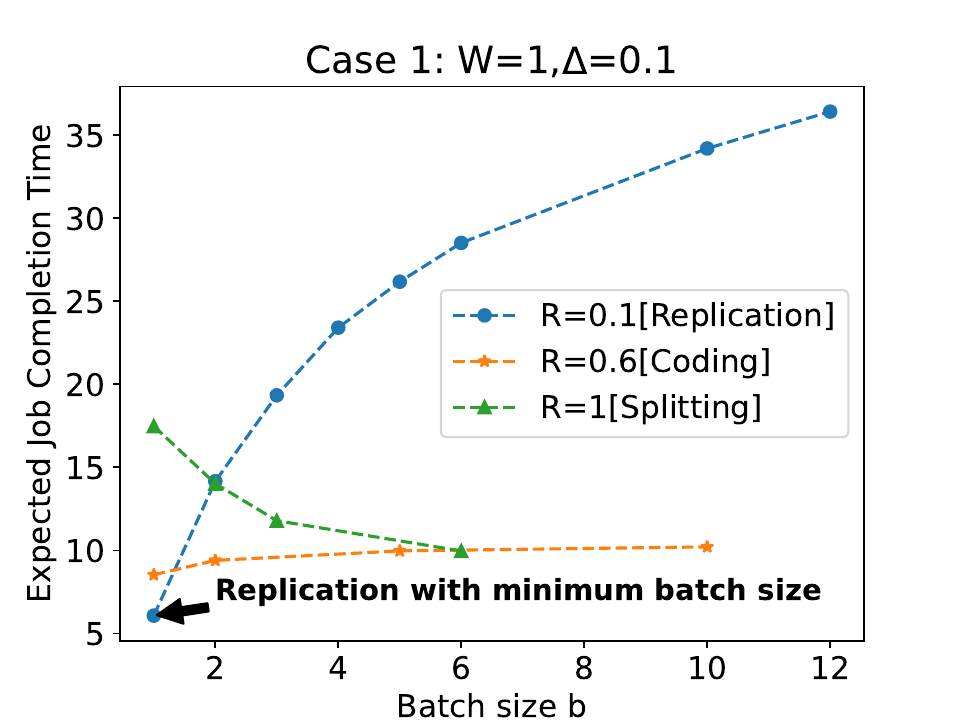}
        \caption{$J=60$ CUs, $n=10$}
        \label{fig:W_1_delta_0}
    \end{subfigure}
    \begin{subfigure}[b]{0.24\textwidth}
        \centering
        \includegraphics[width=\textwidth]{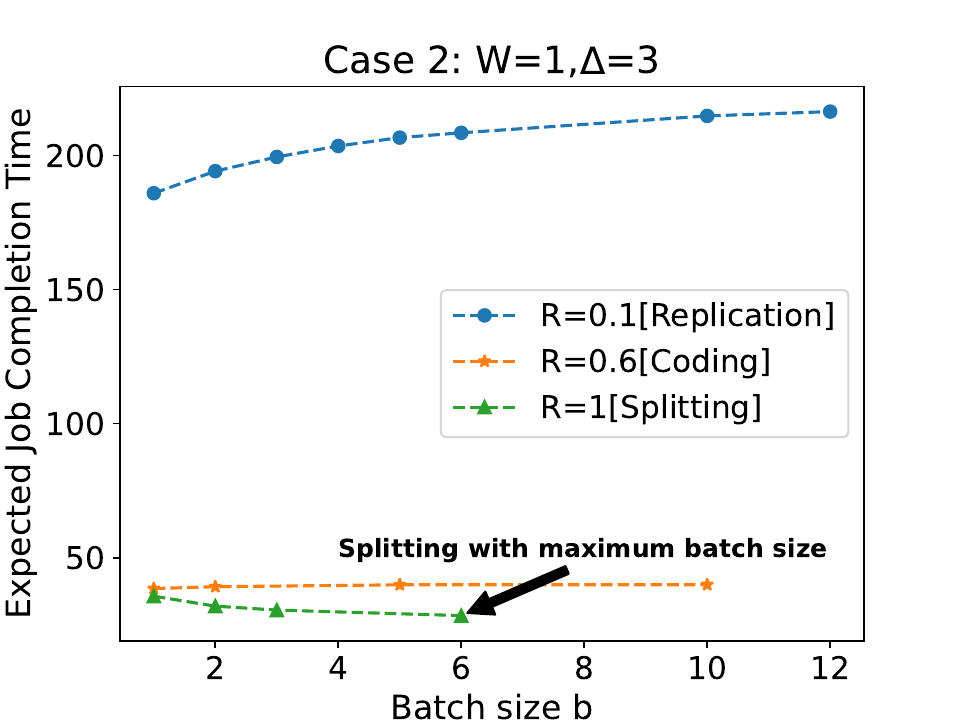}
        \caption{$J=60$ CUs, $n=10$}
        \label{fig:W_1_delta_3}
    \end{subfigure}
    \begin{subfigure}[b]{0.24\textwidth}
        \centering
        \includegraphics[width=\textwidth]{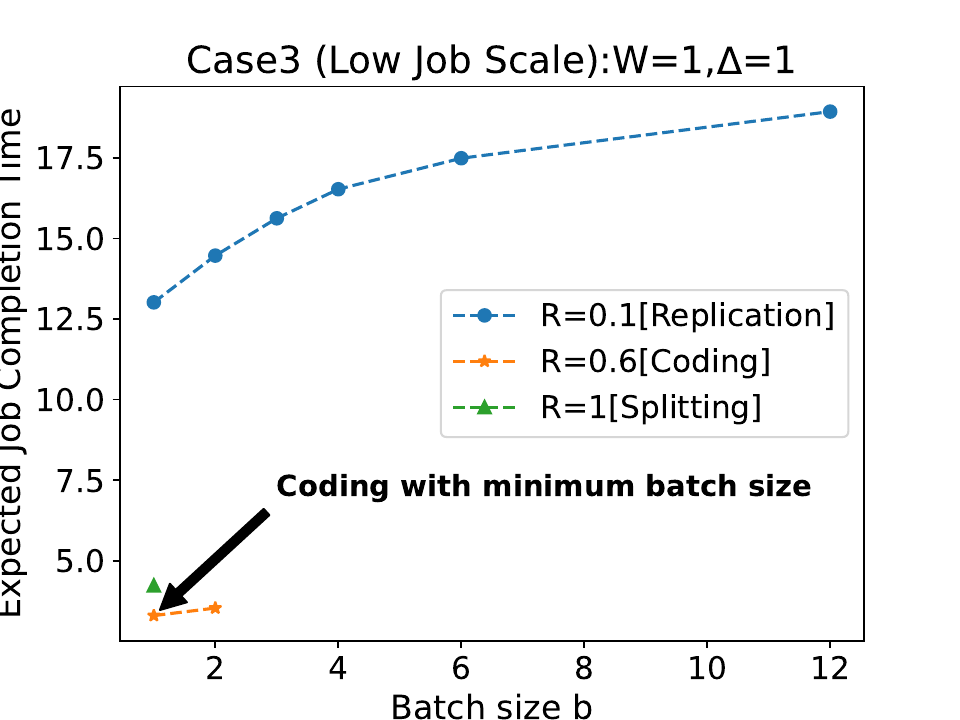}
        \caption{$J=12$ CUs, $n=12$}
        \label{fig:W_1_delta_1_low_job}
    \end{subfigure}
    \begin{subfigure}[b]{0.24\textwidth}
        \centering
        \includegraphics[width=\textwidth]{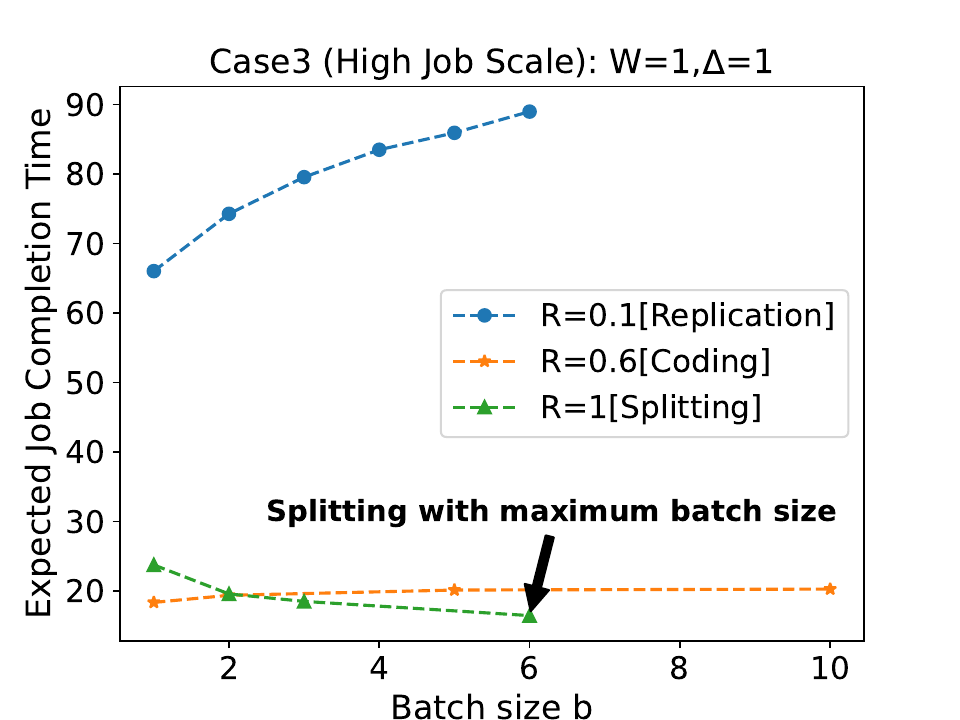}
        \caption{$J=60$ CUs, $n=10$}
        \label{fig:W_1_delta_1_high_job}
    \end{subfigure}
 \caption{Optimizing the expected job completion time: (a) and (b) have the same job size and number of workers. The optimal strategy is replication with minimum batch size for $W=1,\Updelta=0.1$ and splitting with minimum batch size when $W=1,\Updelta=10$. Cases (c) and (d) have $W=\Updelta=1$ and differ by the job scale factor. Coding with the minimum batch size is optimal for the low job scale factor, and splitting with the maximum batch size is optimum for the high job scale factor.}
    \label{fig:exp_optimize_ejct}
\end{figure*}

Next, we identify the optimal combination of code rate and batch size, denoted as $(R^*, b^*)$, which minimizes the expected job completion time across various straggling scenarios. We consider three cases depending on the ratio $\frac{W}{\Updelta}$, which characterizes the randomness in the system. The next three separate theorems give the results.
\begin{theorem}\label{thm:case1}
      For $X \sim S-Exp(\Updelta,W)$ and $\frac{W}{\Updelta} \gg 1$, the expected job completion time is minimized by replication with batch size $b=1$ for a large value of $n$.  
\end{theorem}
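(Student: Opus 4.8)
The plan is to start from the large-$n$ expected completion time \eqref{equ:expo_shifted_ejct}, rewritten as $\expect{\leftindex^GY^b_{k:n}} = \tfrac{l\Updelta}{R} + lW\,f(b,R)$ with $f(b,R)=\tfrac{m}{Rb}$ the very function analyzed in Theorem~\ref{thm:b1_bmax}. The cost splits into a deterministic floor $\tfrac{l\Updelta}{R}=s\Updelta$ (recall $s=\tfrac{l}{R}$) and a stochastic term $lW\,f(b,R)$ coming from the order statistics. As a heuristic first pass, when $\tfrac{W}{\Updelta}\gg 1$ the stochastic term should dominate, which motivates minimizing $f(b,R)$ jointly over $(R,b)$ and only afterwards reinstating the deterministic floor; making this two-step reduction rigorous is exactly the crux.

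First I would establish the universal lower bound $f(b,R)\ge 1$. The defining constraint $P(b,m)=R$ identifies $R$ with a Poisson tail: since $P(b,m)=1-e^{-m}\sum_{j=0}^{b-1}\tfrac{m^j}{j!}=\Pr(N\ge b)$ for $N\sim\mathrm{Poisson}(m)$, Markov's inequality gives $R=\Pr(N\ge b)\le \tfrac{\expect{N}}{b}=\tfrac{m}{b}$, hence $m\ge Rb$ and $f(b,R)=\tfrac{m}{Rb}\ge 1$ for every admissible pair. Consequently the stochastic part is bounded below by $lW$ uniformly over all strategies.

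Next I would show that replication attains this bound as $n\to\infty$. Setting $b=1$ gives $m=-\ln(1-R)$ and $f(1,R)=\tfrac{-\ln(1-R)}{R}$, which one checks is increasing in $R$ (its derivative has numerator $\tfrac{R}{1-R}+\ln(1-R)>0$ for $R>0$), so on the $b=1$ branch the smallest value occurs at the minimal feasible code rate $R=\tfrac1n$, where $f\big(1,\tfrac1n\big)=-n\ln\!\big(1-\tfrac1n\big)\to 1$. Since Theorem~\ref{thm:b1_bmax} already confines the $b$-optimum to $\{1,s\}$ for each $R$, and the $b=s$ branch sits strictly above this value for large $n$, the minimizer of the stochastic term is replication with $b^*=1$, consistent with Theorem~\ref{th:rep}.

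The main obstacle is the deterministic floor $\tfrac{l\Updelta}{R}=s\Updelta$, which is in fact \emph{largest} at replication ($s=nl$ when $R=\tfrac1n$) and scales like $nl\Updelta$, pulling against the conclusion. The delicate step is therefore to make ``$\tfrac{W}{\Updelta}\gg 1$'' quantitative and verify that the $\Theta(W)$ stochastic saving of replication dominates this $\Theta(nl\Updelta)$ penalty: concretely one would compare the exact $T_{\mathrm{rep}}=nl\Updelta+lW$ (from the minimum of $n$ shifted exponentials over $nl$ unit batches) against the best coding or splitting cost and exhibit the crossover threshold, of order $\tfrac{W}{\Updelta}\approx \tfrac{nl}{\ln n}$, implied by the stated regime. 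A secondary care point is that $R=\tfrac1n$ lies on the boundary where the Poisson/Normal approximation behind \eqref{equ:expo_shifted_ejct} is weakest, so the endpoint values should be cross-checked against the exact order-statistic computations underlying Theorems~\ref{th:rep} and~\ref{th:splitting}.
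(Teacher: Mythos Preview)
Your proposal is essentially correct on the stochastic part and in places cleaner than the paper's own argument, but it diverges from the paper on how the deterministic floor is handled.

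\textbf{Where you agree.} Both you and the paper reduce to the two branches $b\in\{1,s\}$ (via Theorem~\ref{thm:b1_bmax}), observe that $f(1,R)=-\tfrac{\ln(1-R)}{R}$ is increasing in $R$ so the $b=1$ branch is minimized at $R=\tfrac1n$, and compute the resulting cost $lW$. The paper then compares this against the $b=s$ branch by showing $m>l$ there.

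\textbf{Where your route is different (and arguably nicer).} To get $m\ge l$ on the $b=s$ branch the paper invokes the Normal approximation to the Poisson, writing $b\approx Z\sqrt{m}+m+1$ and arguing case-by-case on the sign of $Z$. Your Markov bound $R=\Pr(N\ge b)\le m/b$ gives $f(b,R)\ge 1$ for \emph{all} $(b,R)$ in one line, which at $b=s=l/R$ yields $m\ge l$ directly. This is more elementary and avoids the approximation step entirely; it also shows that replication with $b=1$ attains the universal lower bound $lW$ of the stochastic term in the limit $n\to\infty$.

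\textbf{Where you diverge from the paper.} Your ``main obstacle'' --- the deterministic floor $l\Updelta/R$ being largest at replication and scaling like $nl\Updelta$ --- is not treated by the paper as an obstacle at all. The paper reads ``$\tfrac{W}{\Updelta}\gg 1$'' as the limit $\tfrac{\Updelta}{W}\to 0$ and simply drops the $\Updelta$ term from \eqref{equ:expo_shifted_ejct} before optimizing. Under that reading there is no crossover threshold to compute and no order-of-limits subtlety: the comparison is purely between $lW$ (replication, $b=1$) and $mW\ge lW$ (any other strategy). Your quantitative concern about a threshold of order $W/\Updelta\approx nl/\ln n$ is a legitimate refinement of the statement, but it is not what the paper does or intends; if you adopt the limit interpretation, your proof sketch is complete as written and the final paragraph can be excised.
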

\begin{proof}
For batch size $b=1$ and for $\frac{W}{\Updelta} \gg 1$ expected job completion time , Equ.\ref{equ:expo_shifted_ejct} can be written as follows
\[\lim_{\frac{\Updelta}{W} \to 0} \expect{\leftindex^GY_{k:n}^{b=1}} = \frac{lWP^{-1}(R,1)}{R} =-\frac{lWlog(1-R)}{R} \]
which is a strictly increasing function with code rate $R$. So, for the batch size $b=1$, the optimal code rate will be the lowest possible code rate, i.e., $R=\frac{1}{n}$(replication). For batch size $b=1$ with replication $k=1$, the expected job completion time can be derived as follows
\begin{equation}\label{equ:case_1_r_min}
        \lim_{\frac{\Updelta}{W} \to 0} \expect{\leftindex^GY_{1:n}^{b=1}} =G\frac{W}{n}=lW\\
\end{equation}
We next calculate the expected job completion time for the maximum batch size, where $b_{max}=s=\frac{l}{R}$.


\[\lim_{\frac{\Updelta}{W} \to 0} \expect{\leftindex^GY_{k:n}^{b=\frac{l}{R}}} = WP^{-1}(R,\frac{l}{R}) =mW \].

For a high job scale factor, $l \gg 1$, $P^{-1}(R,\frac{l}{R})=m$ can be considered a moderately high value to use the Normal approximation to the Poisson distribution. 
\begin{equation}
    l \approx R (Z\sqrt{m} + m)
\end{equation}
where $\upphi(Z)=1-R$ and $\upphi(\cdot)$ is the area of the left tail of the standard normal CDF.
$Z$ is close to $0$ for code rate $R \approx \frac{1}{2}$ and is negative for $R=1$(splitting). So, for both cases, $l < m$. which implies that replication ($R=\frac{1}{n}$) with unit batch size $b=1$ is optimal.
It can be shown that numerically, $l < m$ also holds for a low job scale factor, $l \approx 1$.
\end{proof}

\begin{theorem}\label{thm:case2}
    For large $n$ and $\frac{W}{\Updelta} \rightarrow 0$, splitting with the maximum batch size is the optimum strategy.  
\end{theorem}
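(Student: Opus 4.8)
The plan is to mirror the structure of the proof of Theorem~\ref{thm:case1}, but to exploit the opposite limit. I start from the large-$n$ expected completion time \eqref{equ:expo_shifted_ejct}, which, after multiplying through by $W$, I write as
\[
\expect{\leftindex^G Y^b_{k:n}} = \frac{l\Updelta}{R} + \frac{lW}{Rb}\,m,
\]
where $m=m(b,R)$ is the solution of $P(b,m)=R$. The first summand is the deterministic work term; since $s=l/R$ it equals $s\Updelta$ and is independent of the batch size $b$, while the second summand is the randomness overhead and is proportional to $W$. First I would isolate the leading behaviour as $\frac{W}{\Updelta}\to 0$: for any fixed feasible pair $(b,R)$ the second summand is $O(W)$ and vanishes relative to the first, so to leading order the objective is $l\Updelta/R$, which is strictly decreasing in $R$. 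Over the feasible code rates $R\in\{1/n,2/n,\dots,1\}$ this is minimized at the largest rate $R=1$, i.e.\ splitting. This matches the intuition that in the low-randomness regime one should simply minimize the per-worker task size $s=l/R$, which is smallest ($s=l$) at $R=1$.

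Next I would make this ranking rigorous, the cleanest route being to avoid the asymptotic formula at $R=1$ (where it degenerates) altogether. At $R=1$ with the maximum batch $b=s=l$ there is a single generation, and each worker's time is a deterministic offset plus a sum of $l$ exponential CU times; the exact completion time is therefore $l\Updelta + W\,\expect{M_n}$, where $M_n=\max_{1\le i\le n}\sum_{j=1}^{l}E_{ij}$ with $E_{ij}$ i.i.d.\ unit-mean exponentials, a quantity that does not depend on $W$ or $\Updelta$. Hence this time tends to $l\Updelta$ as $\frac{W}{\Updelta}\to 0$. Any configuration with $R=R_0\le\frac{n-1}{n}$ has expected time at least $l\Updelta/R_0\ge\frac{n}{n-1}\,l\Updelta$, a fixed amount strictly above $l\Updelta$ with gap at least $l\Updelta/(n-1)$. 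Thus for all sufficiently small $\frac{W}{\Updelta}$, splitting with the maximum batch strictly beats every configuration with $R<1$. Having pinned $R=1$, I would then invoke Theorem~\ref{th:splitting}, which is exact and distribution-free, to conclude that among all batch sizes at $R=1$ the completion time is non-decreasing as $b$ shrinks, so $b^*=s$; combining the two steps yields splitting with the maximum batch size as the optimum.

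The main obstacle is precisely the degeneracy of \eqref{equ:expo_shifted_ejct} at $R=1$: there $P(b,m)=1$ forces $m\to\infty$, so one cannot simply substitute $R=1$ to compare batch sizes. I sidestep this by using the asymptotic formula only to rank the rates $R<1$ and to see that $R=1$ is the leading-order minimizer, and by handling the $R=1$ case and its batch-size optimization through the exact order-statistic expression above together with Theorem~\ref{th:splitting}. A secondary point requiring care is that the vanishing $O(W)$ overhead must not reverse the strict ranking induced by the leading term; the explicit positive gap $l\Updelta\bigl(\tfrac{1}{R_0}-1\bigr)\ge l\Updelta/(n-1)$ makes this quantitative and closes the argument.
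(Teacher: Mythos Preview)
Your proposal is correct and follows essentially the same approach as the paper: the paper defers the full proof to the extended version but sketches precisely your argument, namely that as $\frac{W}{\Updelta}\to 0$ the random $O(W)$ contribution vanishes and the dominant deterministic term $s\Updelta=l\Updelta/R$ is minimized at $R=1$. Your added care in handling the degeneracy of \eqref{equ:expo_shifted_ejct} at $R=1$ via the exact order-statistic expression, together with the appeal to Theorem~\ref{th:splitting} for the batch-size step, is a clean way to make that sketch rigorous.
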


%



The above result is expected if we analyze the behavior of the system. Informally, when $\frac{W}{\Updelta} \to 0$, little randomness is present in the execution of the unit task and among the workers. All workers are "equally" fast or slow, determined by the value $\Updelta$. In the splitting regime with the maximum batch size $b_{max}=s$, all $n$ workers will complete their single batch generation more or less simultaneously with $s\Updelta$.

\begin{theorem}\label{thm:case3}
For large $n$ and $\frac{W}{\Updelta} \rightarrow 1$,
coding with unit batch size completes the job faster than splitting for low job scale factor \(l\). For a high job scale factor \(l\), the relationship is inverse. 
\end{theorem}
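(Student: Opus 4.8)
The plan is to pit the two candidate strategies against each other as functions of $l$: coding with unit batch ($b=1$, optimized over $R$) versus splitting with the maximal batch ($R=1$, $b=s$, optimal for splitting by Theorem~\ref{th:splitting}). Setting $\Updelta=W$ (the $\frac{W}{\Updelta}\to 1$ regime) and measuring time in units of $W$, I would first evaluate \eqref{equ:expo_shifted_ejct} on the coding side. For $b=1$ the defining equation $P(1,m)=R$ collapses to $1-e^{-m}=R$, so $m=-\ln(1-R)$ and the completion time is $T_{\mathrm{code}}(R)=\frac{l}{R}\big(1-\ln(1-R)\big)$. Since $l$ factors out, the minimizing rate $R^\ast$ is a constant independent of $l$, and $T_{\mathrm{code}}^\ast=c_1\,l$ with $c_1=\min_{R}\frac{1-\ln(1-R)}{R}$. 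Setting the derivative to zero reduces to $\tfrac{1}{1-R}=2-\ln(1-R)$, locating $R^\ast\approx 0.68$ and giving $c_1\approx 3.15$; the only qualitative fact I need is $c_1>2$.

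Next I would handle splitting. Here the quantile formula degenerates, since $P(s,m)=R=1$ has no finite root, so I would compute the completion time directly as the expected maximum of $n$ i.i.d.\ shifted Gamma variables. With $R=1$ we have $s=l$, and the single-generation worker times are $Y_i^{s}=l\Updelta+W\,G_i$ with $G_i\sim\mathrm{Gamma}(l,1)$, hence $T_{\mathrm{split}}=l\Updelta+W\,\expect{\max_{1\le i\le n}G_i}$. Writing $\expect{\max_i G_i}=l+\rho_n(l)$, extreme-value theory yields a positive correction $\rho_n(l)$ that grows with $n$ but stays sublinear in $l$: for small $l$ it is Gumbel-type, $\rho_n(l)\sim\ln n$, while for moderately large $l$ the Gaussian approximation to the Gamma gives $\rho_n(l)\sim\sqrt{2\,l\,\ln n}$. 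In either case $T_{\mathrm{split}}=2l+\rho_n(l)$ in $W$-units.

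The comparison then reduces to the sign of $\Delta(l)=T_{\mathrm{code}}^\ast-T_{\mathrm{split}}=(c_1-2)\,l-\rho_n(l)$. Because the coding overhead is linear in $l$ with strictly positive slope $c_1-2$, whereas the splitting correction $\rho_n(l)$ is sublinear in $l$, the two terms cross exactly once. For small $l$ the correction dominates, $\Delta(l)>0$, so coding finishes first; for large $l$ the linear term dominates, $\Delta(l)<0$, so splitting finishes first — precisely the claimed reversal. I expect the crossover at $l^\ast\asymp\frac{\ln n}{(c_1-2)^2}$, explaining why the effect only becomes visible once the job scale factor is allowed to grow with $n$.

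The main obstacle is the splitting side: the degeneracy of the incomplete-gamma quantile at $R=1$ forces a separate extreme-value treatment of the maximum order statistic, and the correction $\rho_n(l)$ changes character (Gumbel for small $l$, Gaussian for moderate $l$) across the relevant range. Establishing that $\rho_n(l)$ is uniformly positive, increases with $n$, and remains sublinear in $l$ — so that a single crossover in $l$ is guaranteed — is the technical heart of the argument; the coding side and the inequality $c_1>2$ are by comparison routine.
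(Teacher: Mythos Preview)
The paper omits the proof of this theorem (deferring it to the extended version), so there is no in-paper argument to compare against directly. Your plan is sound and in line with the paper's methodology for Theorems~\ref{thm:case1}--\ref{thm:case2}: specializing \eqref{equ:expo_shifted_ejct} at $b=1$, $\Updelta=W$ gives $T_{\mathrm{code}}(R)=\frac{l}{R}\bigl(1-\ln(1-R)\bigr)$, and your minimization yielding $R^\ast\approx 0.68$, $c_1\approx 3.15>2$ is correct. You are also right that \eqref{equ:expo_shifted_ejct} degenerates at $R=1$ (the paper's own Normal--Poisson device in the proof of Theorem~\ref{thm:case1} breaks there too, since $\Phi(Z)=0$ forces $Z=-\infty$), so a direct order-statistic computation of $\expect{\max_{i\le n}G_i}$ with $G_i\sim\mathrm{Gamma}(l,1)$ is the natural route for the splitting side.

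Two small points. First, you have a sign slip: with $\Delta(l)=(c_1-2)l-\rho_n(l)$, the regime where $\rho_n(l)$ dominates gives $\Delta(l)<0$ (coding faster), and the regime where the linear term dominates gives $\Delta(l)>0$ (splitting faster); your conclusions are right but the stated signs of $\Delta(l)$ are swapped. Second --- and you already flag this --- the coding estimate is an $n\to\infty$ limit that stays finite, whereas $\rho_n(l)$ diverges with $n$ for each fixed $l$, so for \emph{any} fixed $l$ and $n\to\infty$ coding strictly wins. The ``high-$l$'' reversal therefore only exists when $l$ is allowed to grow with $n$; your crossover estimate $l^\ast\asymp\ln n/(c_1-2)^2$ is consistent with the paper's simulations ($n\in\{10,12\}$, crossover between $l=1$ and $l=6$). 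Making ``large $n$'' and ``high/low $l$'' precise in this joint sense is exactly the technical content that needs to be supplied, but the skeleton of the argument is correct.
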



%
%
%
%
\noindent
\ul{Simulation Results:}
We simulated the expected job completion time for three different $\frac{W}{\Updelta}$ ratios and two different job scale factors (Fig. \ref{fig:exp_optimize_ejct}). Fig. \ref{fig:W_1_delta_0} $(\frac{W}{\Updelta}=10, W=1, \Updelta=0.1)$ is optimized by replication with minimum batch size, Fig. \ref{fig:W_1_delta_3} $(\frac{W}{\Updelta}=\frac{1}{3}, W=1, \Updelta=3)$ is optimized by splitting with maximum batch size. Both Fig. \ref{fig:W_1_delta_1_low_job} and Fig. \ref{fig:W_1_delta_1_high_job} have $W=\Updelta=1$, but have different job scale factor $l$. Fig. \ref{fig:W_1_delta_1_low_job} is optimized by coding with minimum batch size (job scale factor $l=1$), and Fig. \ref{fig:W_1_delta_1_high_job} is optimized by splitting with maximum batch size (job scale factor $l=6$).


\section*{Acknowledgment}
This work was supported in part by NSF CCF-2327509. The authors thank M.~Aktas and P.~Peng for valuable discussions.


\newpage
\bibliographystyle{IEEEtran}
\bibliography{bibliofile}

\end{document}